\pgfplotsset{compat=1.14,width=10cm}
\newtheorem{theorem}{Theorem}
\title{Energy Optimization in Ultra-Dense Radio Access Networks via Traffic-Aware Cell Switching }
\author{Metin Ozturk\IEEEauthorrefmark{1}, Attai Ibrahim Abubakar\IEEEauthorrefmark{1}, João Pedro Battistella Nadas\IEEEauthorrefmark{1}, Rao Naveed Bin Rais\IEEEauthorrefmark{2}, Sajjad Hussain\IEEEauthorrefmark{1} and Muhammad Ali Imran\IEEEauthorrefmark{1}
\thanks{\IEEEauthorrefmark{1}Communication, sensing and imaging (CSI) research group, James Watt School of Engineering, University of Glasgow, United Kingdom. 
Emails: m.ozturk.1@research.gla.ac.uk, a.abubakar.1@research.gla.ac.uk, j.battitsella-nadas.1@research.gla.ac.uk, sajjad.hussain@glasgow.ac.uk, and muhammad.imran@glasgow.ac.uk
\IEEEauthorrefmark{2}Electrical and Computer Engineering, Ajman University, UAE. Email: r.rais@ajman.ac.ae}
}
\begin{document}

\maketitle

\begin{abstract}
Ultra-dense deployments in 5G, the next generation of cellular networks, are an alternative to provide ultra-high throughput by bringing the users closer to the base stations.
On the other hand, 5G deployments must not incur a large increase in energy consumption in order to keep them cost-effective and most importantly to reduce the carbon footprint of cellular networks.
We propose a reinforcement learning cell switching algorithm, to minimize the energy consumption in ultra-dense deployments without compromising the quality of service~(QoS) experienced by the users.
In this regard, the proposed algorithm can intelligently learn which small cells~(SCs) to turn off at any given time based on the traffic load of the SCs and the macro cell.
To validate the idea, we used the open call detail record~(CDR) data set from the city of Milan, Italy, and tested our algorithm against typical operational benchmark solutions.
With the obtained results, we demonstrate exactly when and how the proposed algorithm can provide energy savings, and moreover how this happens without reducing QoS of users.
Most importantly, we show that our solution has a very similar performance to the exhaustive search, with the advantage of being scalable and less complex.
\end{abstract}
\begin{IEEEkeywords}
5G, reinforcement learning, cell switching, energy consumption, cellular networks.
\end{IEEEkeywords}

\section{Introduction}

The energy consumption of mobile cellular networks (MCNs) is escalating greatly as a result of the exponential increase in data demands thereby forcing mobile network operators~(MNOs) to continually expand their networks by deploying more base stations~(BSs) to accommodate this data crunch.
The increase in data demands is due to the ever increasing number of hand-held and machine-type devices as well as the development of data-hungry mobile applications such as video-conferencing, streaming, on-line gaming, etc~\cite{FengONOFF}. 

In addition, the provision of local support for Internet of Things (IoT) in the fifth generation of cellular networks~(5G) in order to meet the growing demands for applications including smart health, smart homes, and smart cities, would result in the massive deployment of IoT devices~\cite{Li2018}, which leads to further increases in the energy consumption of cellular networks.

Network densification, comprising the ultra-dense deployment of small cells (SCs) alongside macro cells (MCs) has been identified as one of the enablers for achieving the one thousand fold capacity increase in 5G networks~\cite{Andrews2014, GeUDN}. 
SCs are low power nodes which are deployed to boost the capacity at hotspot zones, the busiest areas of a deployment.
However, the massive deployment of SCs would also result in a dramatic increase in energy consumption.

Two adverse effects are rooted on the increasing energy consumption in MCNs: 1) An increase in green house gas emission to the environment and 2) an increase in the network operating cost due to higher energy bills charged to MNOs~\cite{Buzzi2016}.
Therefore, energy optimization needs to be given more consideration in 5G networks so that the increase in network capacity due to network densification does not incur large increase in energy consumption, thereby making it cost-effective and most importantly environmentally friendly~\cite{ALAMU2020}.

BSs are the major energy consumers, accounting for about 60\%--80\% of the total energy consumption in cellular networks~\cite{FengONOFF}.
As such, reducing the energy consumption of BSs will amount to significant reduction in the total energy consumption of MCNs.
From energy optimization perspective, the ideal operation of the BS is to be in such a way that the energy consumption scales with traffic load, meaning that at no load, the power consumption should be negligible and gradually increase with the increasing load. 
However, in practice that is not the case, as BSs still consume a significant amount of energy (about 50\%--60\% of its maximum energy consumption) when not serving any user~\cite{Buzzi2016}. 
Since the traffic load of cellular networks exhibit temporal and spatial variations, the traditional technique of keeping the BS always ON even when it is not serving any user results in energy wastage. 
Therefore, load adaptive network operation, where BSs are turned off or operated in low power modes during periods of low or no traffic in order to save energy, has been the focus of many studies~\cite{Luo2018, Sun2018, Wu2016, Buzzi2016, YuEE, FengONOFF}.

Nevertheless, it is not always feasible to completely switch off SCs in the conventional heterogeneous network~(HetNet) architecture because it often creates coverage holes, which in turn degrade the Quality of service (QoS) of users initially covered by the inactive SCs. 
In addition, sleeping BSs do not transmit pilot signals needed by the user equipment (UE) for cell discovery, channel estimation, and subsequent connection, hence in the conventional HetNet architecture certain components of the SCs need to be left always ON even in sleep mode, resulting in sub-optimal energy savings~\cite{mohamed2016control, onireti2015energy}.

Handling the aforementioned challenges of BS switching in conventional HetNets requires a paradigm shift towards a Control/Data Separated Architecture (CDSA)~\cite{mohamed2016control}. 
In CDSA, the MCs---also known as control BS (CBS)---maintain constant coverage and provide signalling functionalities and low data rate services, while the SCs---also known as data BS (DBS)---provide high data rate services and are connected to the MCs through the backhaul. 
This architecture provides support and flexibility for dynamic cell switching operations as the MC always ensures constant coverage for both idle and active users, and is responsible for switching BSs off/on as well as associating users to the SCs, thereby making the complete switch off of DBSs possible~\cite{mohamed2016control}.

Cell switching with traffic offloading has been identified as one of the techniques for reducing the energy consumption of MCNs.
Several methods have already been proposed  for scheduling cell switching in the literature using analytical modelling and heuristic algorithms~\cite{Luo2018, Sun2018, Wu2016, Buzzi2016}. 
However, it is very difficult to develop accurate analytical models for network optimization when network dimensions become very large due to network complexity and high computational overhead~\cite{tabassum2018mobility}. 
On the other hand, heuristic algorithms are difficult to generalise and adapt to dynamic environment, such as ultra-dense network deployment scenarios that have been envisioned in 5G~\cite{bui2017survey, ALAMU2020, HussainMLsurvey}. 
They also mostly employ exhaustive search techniques, which makes them computationally demanding and could lead to degradation in the QoS of users.
Recent works in~\cite{Kong2013, el2019distributed, salem2018reinforcement, abubakar2019, el2019location, park2020reinforcement} applied conventional reinforcement learning~(RL) techniques ($Q$-learning) for cell switching because of its ability to adapt to a dynamic network environment through learning. 

However, conventional RL algorithms are very challenging to implement when the network dimensions become huge because it often results in very large state-action space, wh5Gich is computationally demanding to learn. In addition, a considerable amount of memory is required to store the action-value table ($Q$-table).
This makes such algorithms impracticable to implement in real networks. 
In an attempt to solve the curse of dimensionality problem facing conventional RL algorithms, they were combined with value function approximation in order to estimate the optimal policy. 
In this regard, RL with linear function approximators and deep RL approaches were proposed in~\cite{Chen2015, liu2018deepnap, ye2018drag}. Deep RL algorithms have the ability to accommodate large state-action space resulting from large scale networks deployment scenarios, however, training such deep neural network models can be computationally demanding, energy consuming. Hence, they should only be considered when no simpler solution approaches are available or when the complexity of the network requires the application of a non-linear function approximator to estimate the optimal policy. 

We propose a RL algorithm with linear function approximation known as State Action Reward State Action (SARSA) with value function approximation~(VFA) for cell switching and traffic offloading in ultra-dense radio access network~(RAN). In the proposed algorithm, all the states need not be visited as in~\cite{Chen2015} before the optimal policy is learnt. In addition, the algorithm exhibits quick convergence and it is simpler to implement compared to deep RL approaches. 
The learning algorithm is implemented at each MC and it has the ability to learn the optimal switching pattern even when a large number of SCs are deployed under the coverage area of the MCs.

\subsection{Related work}
The authors in~\cite{Luo2018} developed a load based dynamic SC switching scheme for ultra-dense networks in order to minimize the signalling overhead resulting from user traffic offloading during the cell switching process as well as to optimize the energy savings of the network using two heuristic algorithms. 
In~\cite{Sun2018}, the problem of user association and cell sleeping in multi-tier ultra-dense SC networks was formulated as a complex integer programming. Then, two low complexity heuristic algorithms were employed to determine the optimal user association and the cell switching pattern.
A greedy heuristic algorithm was proposed in~\cite{Wu2016} to determine the switching off/on pattern of SCs in a green ultra-dense network in order to optimize the network energy efficiency by considering traffic load of the SCs and service requirements of users.
Heuristic algorithms often employ exhaustive search, which is often slow and computationally demanding, to find the optimal solution, hence they are only suitable for small network deployment. 
On the other hand, the ultra-dense 5G network scenario will involve massive deployment of SCs which would make it practically impossible to adopt such heuristic algorithms, as it would result in huge computational overhead and degradation in QoS.   

An alternative solution for finding optimal switching pattern for ultra-dense deployment scenarios is to employ RL techniques. 
The authors in~\cite{Kong2013, el2019distributed, salem2018reinforcement, abubakar2019, el2019location,  Chen2015} proposed $Q$-learning based cell switching techniques for energy optimization. 
In~\cite{Kong2013}, the BS was assumed to comprise of modular resources, and a $Q$-learning algorithm was developed to dynamically activate and deactivate certain number of modules in the BS based on the traffic demand at each time instance. 
A distributed $Q$-learning algorithm was proposed in~\cite{el2019distributed} to control the sleep depth of the BS in order to minimize energy consumption and network latency.  
The authors in~\cite{salem2018reinforcement} proposed a $Q$-learning algorithm to determine the duration of time that the BS can spend at a particular sleep level in order to optimize energy consumption of the network while considering BS activation latency and the service requirements of users as constraints. 
A $Q$-learning framework was developed in~\cite{abubakar2019} in order to determine the optimal switching and traffic offloading strategy in a two-tier heterogeneous network with separation architecture. 
The authors in~\cite{el2019location} proposed a location-aware multi-level sleep mode strategy using $Q$-learning to determine the sleep mode level of the BS based on user locations in the network and their velocity towards the neighbouring BS.
Nonetheless, the works in~\cite{Kong2013, el2019distributed, salem2018reinforcement, abubakar2019, el2019location} only considered small to medium network deployment scenarios, where the state-action space is suitable for the implementation of conventional RL algorithms.

The authors in~\cite{Chen2015} proposed a centralized and decentralised $Q$-learning algorithm with compact state representation ($Q$C-learning) for traffic offloading and cell switching for HetNet to minimize energy consumption. 
The centralized $Q$C-learning was implemented at the BS controller. 
The $Q$C-learning is a compact representation of the state-action pair using linear VFA when it becomes practically impossible to explicitly store each state-action pair in a look-up table for situations, where the number of SCs under the BS controller become very large.
Moreover, even with the compact state representation of the $Q$-value lookup table, as the number of SCs in the network becomes very large, the action set also grows dramatically, thereby making it difficult to implement a centralized cell switching and traffic offloading scheme. 
Therefore, a decentralized multi-agent-$Q$C-learning was developed, where the MCs learn in a cooperative manner and take joint traffic offloading and cell switching actions by exploiting the previous cell switching strategies used by other MCs. 
Furthermore, in developing decentralized multi-agent-$Q$C-algorithm, the authors~\cite{Chen2015} assumed that all the MCs under the controller have similar or stationary network states, which is a requirement for implementing joint cell switching and traffic offloading strategy.
This might not be the case in real networks, as networks' states may vary from one MC to another due to temporal and spatial variations in user traffic demands~\cite{Busoniu2008}, and as such it might not always be feasible to perform joint cell switching and traffic offloading actions. 
In addition, the problem of increased state-action pair also arises in multi-agent $Q$-learning, since each agent also includes its own state-action pair to the joint state action space~\cite{Busoniu2008}.
This increases the computational complexity as well as the memory required for storing the joint state-action space at the controller.
\subsection{Contribution}
In this article, we propose an intelligent cell switching and traffic offloading framework using a RL technique known as SARSA with VFA~\cite{Sutton2018} in order to reduce the energy consumption of ultra-dense RAN\@.
In the CDSA RAN, a cell switching and traffic offloading mechanism is implemented in a locally centralized manner at each MC, which is responsible for scheduling the switching off/on pattern of all the SCs deployed under its coverage.
The proposed algorithm provides a compact form of representing the action-value function like the $Q$C-learning algorithm in~\cite{Chen2015}, but all the states need not be visited as in~\cite{Chen2015} before the optimal strategy is obtained.
The contributions of this work are as follows:
\begin{itemize}
    \item We propose a scalable traffic-aware RL algorithm based on VFA to find the optimal policy in terms of energy minimization for controlling SC ON/OFF status, without compromising the QoS.
    \item We prove mathematically that turning a SC off is not always profitable in terms of energy savings, and identify such situations where it is not profitable to switch off SCs. 
    \item We evaluate the proposed algorithm using a traffic model based on real world data, making the solution more reliable and realistic.
\end{itemize}

Due to its fast convergence and learning ability, RL is employed in this work to make real-time, accurate, and efficient switch off/on decision at each time slot.
Moreover, VFA is utilized in the developed RL algorithm, since the state space expands exponentially with increasing number of SCs in the network.
Even though this could be handled by conventional RL algorithms to some extent, it becomes infeasible to manage once the network size becomes very large.
In addition, due to the careful and proper design of the action set in the proposed algorithm, there is no need to include all the possible switching combinations, thus annihilating the need for cooperative learning.
Lastly, we tested the proposed algorithm in a realistic scenario, where all types of SCs given in~\cite{Auer2011} with their diverse characteristics are included.

The remaining parts of this paper are organized as follows: the system model is presented Section~\ref{sec:model}, while Section~\ref{sec:problem} presents the problem formulation and the derivation of an analytical model for SC switching.
The proposed intelligent solution is discussed in Section~\ref{sec:proposed}, and Section~\ref{sec:performance} evaluates the performance of the proposed  solution and the developed benchmark methods.
Lastly, Section~\ref{sec:conclusion} concludes the paper.

\section{System Model}\label{sec:model}
\subsection{Network Model}
As mentioned before, in this work we consider an ultra-dense RAN with a CDSA architecture~\cite{mohamed2016control}. 
The network model, as illustrated in Fig.~\ref{fig:model}, consists of a dense network, where SCs, acting as DBSs, are deployed under the coverage area of a MC, which acts as the CBS\@.
Moreover, SCs and MC operate on dedicated frequency channels and SCs are connected to the MC via optical fibre links.

The MC is responsible for constant coverage, control signalling, and data services, while SCs handle only data services and user specific requests.
Furthermore, the MC coordinates the traffic offloading and switching off/on of all SCs under its coverage by observing their traffic loads and deciding which set of SCs should be turned off during periods of low traffic intensity, taking the available capacity of the MC into account.

\begin{figure}
\centering
\includegraphics[trim={.1cm 0.1cm 0.1cm 0.1cm},clip,width=.5\columnwidth]{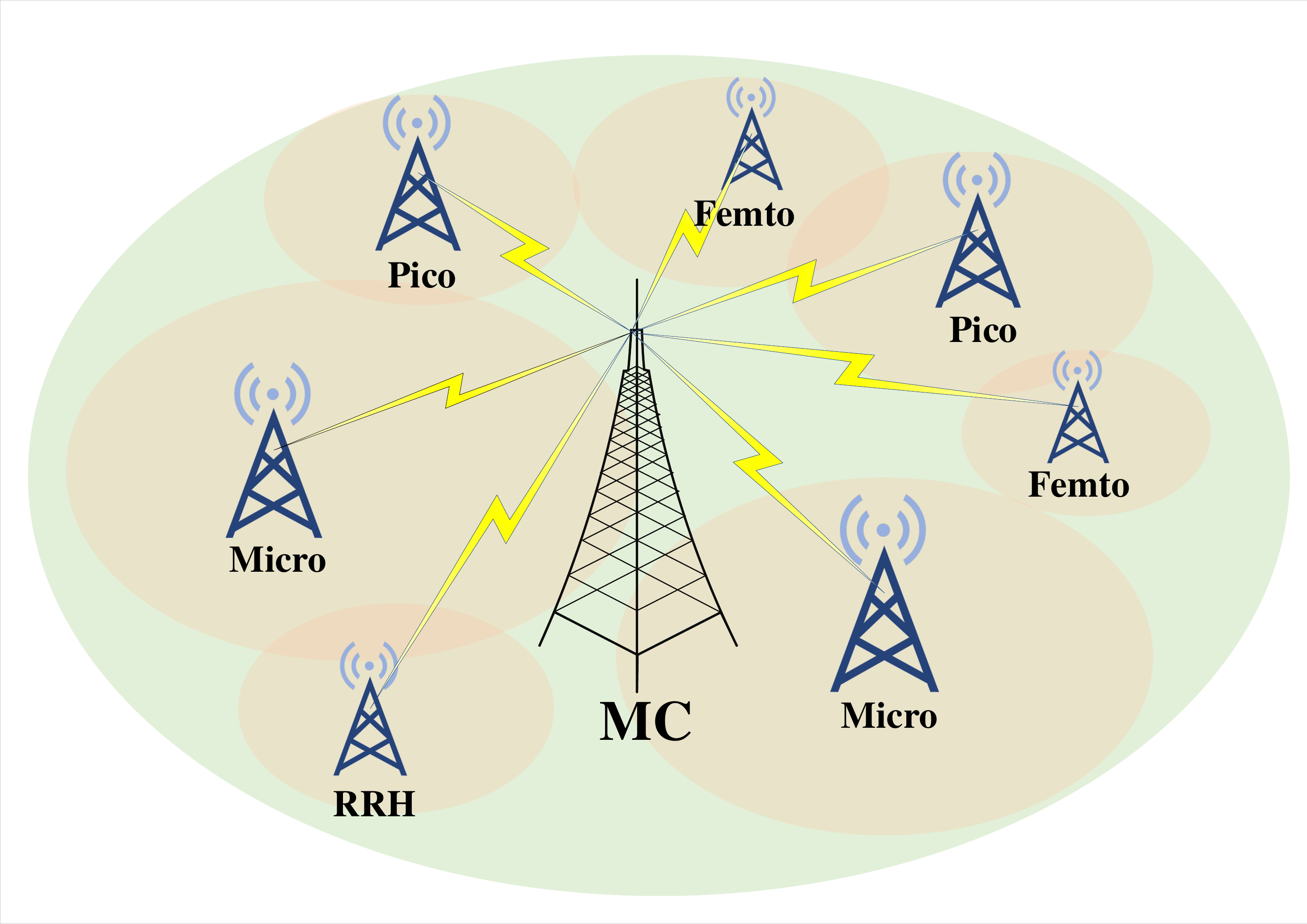}
\caption{A HetNet with CDSA comprising of a MC that works as the CBS and a dense deployment of various types of SCs~(micro, RRH, pico, femto) to provide DBS services.}\label{fig:model}
\end{figure}
\subsection{Network Power Consumption}
Following the Energy Aware Radio and neTwork tecHnologies~(EARTH) power consumption model~\cite{Auer2011}, $P_j$, the instantaneous consumption of a BS, $B_j$, is given by~\cite{Wu2017a}
\begin{equation}\label{eq:earthPower}
    P_j  = 
    \left\{
        \begin{array}{cc}
            P_{\text{o},j} + \eta_j \lambda_j P_{\text{T},j} & 0 < \lambda_j < 1\\
            P_{\text{s},j}                             &   \lambda_j=0,
         \end{array}
    \right.
\end{equation}
where $P_{\text{o},j}$ and $P_{\text{s},j}$ are the operational and sleep circuit power consumption, respectively, $\eta_j$ is the power amplifier (PA) efficiency, $\lambda_j$ is the load factor, and $P_{\text{T},j}$ is the transmit power.
Moreover, all SCs of the same type\footnote{There are five different types of BS considered in this work, macro, micro, femto, pico and remote radio head (RRH).} are considered to have identical hardware, such that their PA efficiencies, and circuit power consumptions are the same.
Also, power allocation is not considered, hence, each type of BS has a fixed transmit power that is constant among BSs of that type.

Lastly, $P$, the instantaneous power consumption for the considered CDSA network, is expressed as
\begin{equation}
\label{eq:totalPower}
P = \sum_{j=1}^{s+1} P_j,
\end{equation}
where $s$ is the number of SCs in the deployment.
\section{Problem formulation}\label{sec:problem}
Considering the architecture described above, the aim of this paper is to find the best policy, in terms of energy savings, which offers a required QoS to the users.
A policy $\pi$ is defined by which SCs should be ON at a given time $t$. 
In other words
\begin{equation}
    \pi = \{\delta_1,\delta_2,\dots,\delta_{s+1}\},
\end{equation}
where $\delta_j\in\{0,1\}$ indicates the state of $B_j$, 1 for ON and 0 for OFF.
$B_1$ represents the MC, and thus $\delta_1$ is always 1, as it is always ON.

Considering $j > 1$, when $\delta_j$ changes from 1 to 0 at time $t$, the MC allocates its users, such that
\begin{align}
    \lambda_{1,t} = \lambda_{1,t-1} + \phi_j\lambda_{j, t-1} \\
    \lambda_{j,t} = 0,
\end{align}
where $\lambda_{j, t}$ corresponds to the load of $B_j$ at time $t$, and $\phi_j$ is the relative capacity of $B_j$ with respect to $B_1$, such that
\begin{equation}
    \phi_j = \frac{C_j}{C_1}, \quad j>1,
\end{equation}
where $C_j$ indicates the total resources available in $B_j$.
Conversely, when $\delta_j$ switches from 0 to 1 at time $t$, the MC offloads some of its traffic to $B_j$, such that
\begin{align}
    \lambda_{j,t} = \frac{\tau_j}{C_j}\label{eq:lambdasc1} \\
    \lambda_{1,t} = \lambda_{1,t-1} - \phi_j\lambda_{j, t}\label{eq:lambdamc1},
\end{align}
where $\tau_j$ corresponds to the resources used by users served by $B_j$.
Note that $C_j \geq \tau_j$.

Therefore, we can formally write the problem as
\begin{mini}|s|
    {\pi}{P(\pi)} 
    {\label{eq:opt}}{}
    \addConstraint{\lambda_1}{\leq 1.} 
\end{mini}
Note that the only constraint in the problem is to ensure that the capacity of the MC is not exceeded, this takes care of the QoS requirement.
In other words, the MC only offloads users from a SC if it can maintain the QoS of any user associated to it.
Moreover, this ensures that once a solution is implemented, 
\begin{equation}
T_{\text{r},1} \leq T_{\text{p},1},    
\end{equation}
where $T_{\text{r},1}$ and $T_{\text{p},1}$ are the required and provided throughputs by the MC, respectively.

\begin{theorem}\label{th:prob}
    If we draw a random SC $B_j$ from $\mathbb{B}$, the set of all possible BSs, the probability of $\delta_j = 1$ integrating the optimal policy is 1 if $j>1$ and
    \begin{equation}\label{eq:theorem}
        \frac{P_{\text{o},j} - P_{\text{s},j}}{\phi_j\eta_1 P_{\text{T},1} - \eta_j P_{\text{T}, j}} < \lambda_j,
    \end{equation}
    when $\phi_j\eta_1 P_{\text{T},1} - \eta_j P_{\text{T}, j} > 0$.
\end{theorem}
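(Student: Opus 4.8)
The plan is to prove the theorem by a policy-interchange argument: I will show that any feasible policy which switches a qualifying SC \emph{off} is strictly beaten, in total power~\eqref{eq:totalPower}, by the otherwise identical policy that keeps it \emph{on}; hence no optimal policy can switch such an SC off, and the stated probability is $1$.

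First I would dispose of $j=1$: $B_1$ is the MC and $\delta_1\equiv 1$ by construction, so the claim is trivial there. Then I would fix $j>1$ and suppose, for contradiction, that some optimal policy $\pi^\star$ has $\delta_j=0$, i.e.\ $B_j$ is asleep and its load sits on the MC. I would define $\pi'$ to agree with $\pi^\star$ on every BS except $B_j$, which is switched on and reclaims its load $\lambda_j$; by~\eqref{eq:lambdasc1}--\eqref{eq:lambdamc1} this reduces the MC load by exactly $\phi_j\lambda_j$. Feasibility of $\pi'$ follows from that of $\pi^\star$: since $\phi_j\lambda_j\ge 0$, the MC load under $\pi'$ is no larger than under $\pi^\star$, hence still $\le 1$, and every other cell is untouched.

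Next I would compare $P(\pi^\star)$ with $P(\pi')$ using~\eqref{eq:earthPower}. All terms except those of $B_j$ and $B_1$ cancel. Passing from $\pi^\star$ to $\pi'$, $B_j$ moves from sleep to operation, changing its consumption by $P_{\text{o},j}+\eta_j\lambda_j P_{\text{T},j}-P_{\text{s},j}$, while the MC's consumption changes by $-\phi_j\lambda_j\eta_1 P_{\text{T},1}$ (the MC staying in operational mode; if removing that load instead sent it to $P_{\text{s},1}$, the gap only widens in the same direction). This gives
\[
P(\pi^\star)-P(\pi')=\lambda_j\bigl(\phi_j\eta_1 P_{\text{T},1}-\eta_j P_{\text{T},j}\bigr)-\bigl(P_{\text{o},j}-P_{\text{s},j}\bigr).
\]
When $\phi_j\eta_1 P_{\text{T},1}-\eta_j P_{\text{T},j}>0$, dividing the inequality $P(\pi^\star)-P(\pi')>0$ through by it produces exactly~\eqref{eq:theorem}. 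So whenever~\eqref{eq:theorem} holds, $\pi'$ strictly improves on $\pi^\star$, contradicting optimality; therefore every optimal policy has $\delta_j=1$ and the probability is $1$.

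The main obstacle I anticipate is purely bookkeeping: tracking the MC load so that switching $B_j$ on removes \emph{exactly} $\phi_j\lambda_j$ from it regardless of which other SCs are off, and handling the boundary case in which that removal would push the MC into sleep mode --- which, as noted, only reinforces the strict inequality and hence the conclusion. The remaining algebra is routine.
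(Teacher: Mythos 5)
Your proposal is correct and rests on exactly the same computation as the paper's proof: the paper evaluates $\Delta P = P_t - P_{t-1}$ for switching $B_j$ on while holding all other cells fixed, obtains $\Delta P = P_{\text{o},j} + \eta_j\lambda_j P_{\text{T},j} - \eta_1\phi_j\lambda_j P_{\text{T},1} - P_{\text{s},j}$, and divides by $\phi_j\eta_1 P_{\text{T},1}-\eta_j P_{\text{T},j}>0$, which is precisely your $P(\pi^\star)-P(\pi')$ up to sign. Your exchange-with-contradiction framing and the explicit feasibility check (the MC load can only decrease when $B_j$ reclaims its traffic) are slightly more careful packaging of the same argument, not a different route.
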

\begin{proof}
    Using~\eqref{eq:totalPower}, the difference in power consumption $\Delta P$ when considering changing $\delta_j$ to 1 can be expressed as
    \begin{equation}\label{eq:deltap}
        \Delta P = P_{t} - P_{t-1} = \sum_{i=1}^{s+1} P_{i,t} - \sum_{i=1}^{s+1} P_{i, t-1},
    \end{equation}
    where $P_t$ is the total power consumption of the network at time $t$ and $P_{i,t}$ is the power consumption of $B_i$ at time $t$.

    Next,~\eqref{eq:deltap} can be expanded as
    \begin{align}\label{eq:deltap2}
        \begin{split}
            \Delta P = P_{1,t} &+ P_{j, t} + \sum_{i=2, i \neq j}^{s+1} P_{i,t} \\ & - \left(P_{1,t-1} + P_{j,t-1} + \sum_{i=2, i \neq j}^{s+1} P_{i,t-1}\right).
        \end{split}
    \end{align}

    Assuming $P_{i\notin \{1, j\},t} =P_{i\notin \{1, j\},t-1}$, meaning all other SCs are kept at their states,~\eqref{eq:deltap2} then becomes
        \begin{equation}\label{eq:deltap3}
            \Delta P = P_{1,t} + P_{j, t} - P_{1,t-1} - P_{j, t-1}.
        \end{equation}
    
        Next, using~\eqref{eq:earthPower} and replacing~\eqref{eq:lambdasc1} and~\eqref{eq:lambdamc1} into~\eqref{eq:deltap3} yields
        \begin{equation}
            \begin{split}
                \Delta P = & P_{\text{o},1} + \eta_1 ( \lambda_{1,t-1} - \phi_j\lambda_{j, t}) P_{\text{T,1}} \\ 
                         + & P_{\text{o},j} + \eta_j \lambda_{j,t} P_{\text{T},j} \\
                         - & P_{\text{o},1} - \eta_1 \lambda_{1,t-1} P_{\text{T,1}}\\
                         - & P_{\text{s},j},
            \end{split}
        \end{equation}
        which can be further simplified to
        \begin{equation}\label{eq:deltap4}
            \Delta P =  P_{\text{o},j} + \eta_j \lambda_{j,t} P_{\text{T},j} - \eta_1 \phi_j\lambda_{j, t} P_{\text{T,1}} - P_{\text{s},j}.
        \end{equation}
    
        From~\eqref{eq:deltap4}, it is easy to see that $\Delta P < 0$ when 
        \begin{equation}\label{eq:inequality}
            P_{\text{o},j} + \eta_j\lambda_{j,t} P_{\text{T}, j} < \phi_j\eta_1\lambda_{j,t} P_{\text{T},1} + P_{\text{s},j} ,
        \end{equation}
        Note that, in order to isolate $\lambda_{j,t}$ in~\eqref{eq:inequality}, we must divide both sides by $\phi_j\eta_1 P_{\text{T},1} - \eta_j P_{\text{T}, j}$, and thus~\eqref{eq:theorem} is only valid for $\phi_j\eta_1 P_{\text{T},1} - \eta_j P_{\text{T}, j} > 0$.
        Lastly, we remove the index $t$ from~\eqref{eq:inequality} to make it general and solve for $\lambda_j$, yielding~\eqref{eq:theorem}.
\end{proof}

Based on Theorem~\ref{th:prob}, we can see that the optimal policy will tend to have more SCs turned on when they are more loaded, as the inequality in~\eqref{eq:theorem} will be more easily met.
Similarly, we can also see that when the transmit power of SCs is smaller, it becomes more advantageous to use them, and the same can be said if the MC is not very efficient ($\eta_1$ is small).
Moreover, as one would expect, when the power consumption of a sleeping BS is higher, the less likely it will be that turning it off would result in energy savings.
Lastly, we can clearly see that when~\eqref{eq:theorem} occurs for any BS in the network, there will be a situation, where the most optimal policy has SCs turned on and consumes less energy than keeping only the MC operational.
\section{Proposed Solution}\label{sec:proposed}
\subsection{Value Function Approximation}
The goal of this paper is to find the best policy, that is, the set of SCs to switch off/on per time, in order to optimize the energy consumption of the network.
However, since the complexity of the problem is exponentially increasing, in terms of possible configurations, its solution is non-trivial.
With that in mind, we propose an intelligent solution, based on RL, in order to minimize the energy consumption of the network without compromising the QoS.

However, traditional RL techniques---such as vanilla $Q$-learning or SARSA---rely on a state-action matrix, which has to contain an entry for each possible state-action pair.
This is an issue in this case because of the aforementioned dimensionality of the problem, which results in a very large matrix, and this in turn makes the algorithm infeasible.
To circumvent this issue, we use RL alongside VFA, which is capable of estimating the state based on a set of features, and therefore does not have to maintain a huge structure with all the possibilities~\cite{Sutton2018}.

The way VFA can be used to enhance traditional RL is by estimating the action-value function with an approximator.
It works by finding a set of weights for a known function by solving an optimization problem based on given examples.
In other words, VFA solves
\begin{mini}|s|
    {\theta}{|Q - \hat{Q}(\theta)|} 
    {\label{eq:vfaopt}}{}
\end{mini}
where $Q$ is the value of the matrix obtained from given examples, and $\hat{Q}$ is the estimation, expressed by~\cite{Sutton2018}
\begin{equation}
    \hat{Q} = f(\theta),
\end{equation}
where $f(.)$ is the known function---also referred to as hypothesis---, and $\theta$ is the set of weights.
\subsubsection{Linear Function Approximators}
Even though any function can be used as the hypothesis, some render~\eqref{eq:vfaopt} either too hard or infeasible.
Linear functions, shallow neural networks, and deep neural networks are commonly used as $f(.)$, and the choice depends on the type of problem.
With regards to the optimization, a popular strategy is to use gradient descent~(GD) or stochastic GD~(SGD) to find $\theta$ based on the known examples~\cite{Sutton2018}.

In this work, we use a linear hypothesis, meaning that $\hat{Q}$ is approximated by a linear combination of input features~\cite{Sutton2018}, in other words,
\begin{equation}\label{eq:qtable}
    \hat{Q} = \vec{X}\vec{\theta^T},
\end{equation}
where $\vec{X}$ and $\vec{\theta}$ are vectors containing input features and weights, respectively, and $\vec{\theta^T}$ indicates $\vec{\theta}$ transposed.
Furthermore, we use SGD as the optimizer, such that $\vec{\theta}$ is updated according to 
\begin{equation}
    \nabla_\theta{(Q(\vec{X}) - \hat{Q}(\vec{X}, \vec{\theta}))}^2,
\end{equation}
where $\nabla_\theta$ represents the gradient with respect to $\theta$.
In other words, at each iteration $t$ we perform~\cite{Sutton2018}
\begin{equation}\label{eq:weight_update}
   \vec{\theta}_{t+1} = \vec{\theta}_{t} - \frac{1}{2}\nabla_\theta{\left[Q(\vec{X}) - \hat{Q}(\vec{X}, \vec{\theta_t})\right]}^2.
\end{equation}
Note that, $\vec{\theta}$ is initialized with an arbitrary value, such as zero.

After some iterations, if enough different examples have been observed, $\vec{\theta}$ will have converged and can be used to find the action-value function, which in turn can guide the policy.

\subsection{SARSA with VFA}
For the readers benefit, we have included a small summary of SARSA in this section.
The algorithm works by observing the cost of taking actions and updating its estimate of $\vec\theta$ at every iteration and then choosing the best action according to its estimate of the action-value matrix.
Algorithm~\ref{alg:sarsavfa}~ \cite{Sutton2018} contains a pseudo code implementation of SARSA with VFA\@, where $S_{t}$ and $S_{t+1}$ are the current and next states, respectively; and $a_{t}$ and $a_{t+1}$ are the current and next actions, respectively.
Note that lines 22 to 29 in Algorithm~\ref{alg:sarsavfa} are correspondent to the stopping criteria, where $j_\text{min}$ is the minimum number of iterations to take before the stopping criteria come into the effect.
$C_\text{min}$ and $C_\text{max}$ are the minimum and maximum cost functions observed up to that iteration, respectively.
Lastly, $\Omega$ is the threshold for the feature scaled cost, while $j_\text{rep}$ defines the number of iterations to be repeated~(the conditions on lines 23 and 24 are kept satisfied) before stopping.


\begin{singlespace}

\begin{algorithm}[]
	\For{Every episode} 
	{
		Initialize the current state, $S_{t}$; \\
		\For{All actions in $\vec{A}$}
		{
			Get features, $\vec{X}$; \\
			Estimate value of $Q$ through~\eqref{eq:qtable}; \\
		}
		Choose action, $a_{t}$, according to policy; \\
		Set $j_\text{it}=0$;\\
		\For{Each iteration}
		{
		    Update $j_\text{it} \leftarrow j_\text{it}+1$;\\
			Take the action $a_{t}$; \\
			Observe cost, $C$, via~\eqref{eq:vfa_cost}; \\
			Move to next state; \\
			\For{All actions in $\vec{A}$}
			{
				Get features, $\vec{X}$; \\
				Estimate value of $Q$ through~\eqref{eq:qtable}; \\
			}
			Choose next action, $a_{t+1}$, according to policy;\\
			Update the weights, $\vec{\theta}_{t}$, using~\eqref{eq:weight_update}; \\
			$S_t \leftarrow S_{t+1}$; \\
			$a_t \leftarrow a_{t+1}$; \\
			\If{$j_\text{it} > j_\text{min}$}{
			    \If{$\dfrac{C-C_\text{min}}{C_\text{max}-C_\text{min}}\leq \Omega$}{
			        \If{$C$ is the same for $j_\text{rep}$ iterations}{
			            Stop executing\;
			            Jump to the next episode\;
			        }
			    }
			}
		}
	}
	\caption{Proposed SARSA with VFA}
	\label{alg:sarsavfa}
\end{algorithm}
\end{singlespace}
\subsection{Traffic Aware Energy Optimization via Cell Switching}
Leveraging the framework described above, we propose a linear VFA solution to solve~\eqref{eq:opt}.
Since we are looking to accomplish a globally optimal solution, our proposed framework is centralized and computed at the MC\@.
As the total number of policies increases exponentially with $s$, it would not be scalable to consider any policy as an action.
Therefore, we propose a reduced action space as follows:
\subsubsection{Actions}
The actions for the proposed VFA consist of switching off/on different SCs in the network.
However, because there are so many possibilities, we propose an alternative representation which allows the algorithm to sample several different possibilities by taking different actions. 

This representation is done as follows.
First, the status of the SCs in the network are converted to a binary number, such that the SCs that are ON are treated as binary 1, while the SCs that are OFF are considered binary 0.
In this regard, the status of the network at time $t$ is
\begin{equation}
    \vec{\chi}(t)=\{\chi_i(t) | i\in\{1,2,...,s\}\},
\end{equation}
where  $\chi_i\in\{0,1\}$ is the state of the $i^\text{th}$ SC in the network.

Next, $\vec{\chi}$ is represented by a binary number $\chi_\text{b}$ with $s$ digits, such that the status of each SC represents one of its digits and thus
\begin{equation}
    \chi_\text{b}=\chi_1\chi_2...\chi_s.
\end{equation}

Within the proposed representation, the set of possible actions is defined as 
\begin{equation}
    \vec{A}=[0, \pm\xi^0, \pm\xi^1,...,\pm\xi^s],
\end{equation}
where $\xi$ is a decimal constant number, defining the inter-space between two consecutive actions.

In this case, taking an action at time $t$, means to perform 
\begin{equation}
    \chi_\text{d}(t+1)=\chi_\text{d}(t)+A_z,
\end{equation}
where $A_z$ is an entry\footnote{Note that, not all entries of $\vec{A}$ can be selected at every time instant. Instead, the only valid actions are those which render $0 \leq \chi_\text{d} \leq 2^s - 1$.} from $\vec{A}$ and $\chi_\text{d}$ is the decimal representation of $\chi_\text{b}$.
In other words, an action consists of turning on/off a number of base stations, depending on the current status and on $A_z$.
This ensures that, instead of checking all possible statuses of SCs, we are taking only some samples of the entire set.

To illustrate the proposed idea, consider the following example. 
In a network with 4 SCs at time $t$, let $\vec{\chi}(t) = \{1, 1, 0, 1\}$. 
In this case, $\chi_\text{d}(t) = 13$.
Now, let us assume that $\xi = 2$ and that the action selected is $+\xi^1$.
In this case, the next status will be $\chi_\text{d}(t+1)= \chi_\text{d}(t) + \xi^1 = 15$, or in other words $\vec{\chi}(t+1) = \{1, 1, 1, 1\}$, which implies that all BSs are turned on at time $t+1$.

\subsubsection{Features}
The features used by the MC in order to find $\hat{Q}$ are the total power consumption of the network and the total load factor of each BS, such that
\begin{equation}
    \vec{X} = [P, \lambda_1, \lambda_2, \ldots,\lambda_{s+1}].
\end{equation}

\subsubsection{Cost}
The cost, or penalty, that we propose is described as the total power consumption plus a penalty---which is proportional to the number of SCs and the MC load---if the load of the MC is exceeded.
We can formally define it as
\begin{equation}\label{eq:vfa_cost}
    C = P + s\sigma\kappa \lambda_1,
\end{equation}
where $C$ is the cost, $\kappa$ is a penalty factor, and $\sigma\in\{0,1\}$ indicates whether or not the MC is overloaded.
\subsubsection{Complexity}
The most important benefit of the proposed solution is to greatly reduce the complexity of finding a good operating point.
Since there are $s$ SCs which could be switched off/on in any combination, an exhaustive search approach would have complexity $O(2^s)$, while our proposed approach only keeps track of $2s$ actions, and therefore has a complexity $O(s)$.

However, note that our solution does not strictly guarantee the constraint in~\eqref{eq:opt}.
This is important to give the RL algorithm the chance to explore different actions and learn what is not good.
By incorporating $\sigma\kappa$ into the cost, we can influence the RL algorithm not to violate the constraint, and thus satisfy the QoS requirements of the users whilst seeking the best policy regarding energy consumption.

\section{Performance Evaluation}\label{sec:performance}
In this section, the employed data set, the benchmark methods developed, performance evaluation metrics, and the obtained results will be discussed comprehensively.
The simulation parameters are provided in Table~\ref{tab:sim_parameters}, while Table~\ref{table:power_profile} presents the power consumption characteristics of the BSs used in the simulations.
\begin{table}[htbp]
\centering
\caption{Simulation Parameters}\label{tab:sim_parameters}
\begin{tabular}{ll}
\toprule
\textbf{Parameter}                   & \textbf{Value} \\
\midrule
 
\multicolumn{2}{c}{\textbf{SARSA with VFA}}      \\
Chance of choosing random action, $\epsilon$                            & 0.8            \\
Learning rate, $\alpha$                              & $10^{-7}$      \\
Discount factor, $\gamma$                              & 0.9            \\
Inter-space between two consecutive actions, $\xi$                                 & 2              \\
Minimum number of iterations for stopping, $j_\text{min}$                        & 10             \\
Threshold for the feature scaled cost, $\Omega$                              & $5\times10^{-2}$           \\
Consecutive iterations ensuring stopping criteria, $j_\text{rep}$                        & 10             \\
Maximum number of iterations                  & 100  \\  
\multicolumn{2}{c}{\textbf{General}}      \\
Number of time slots                  & 144            \\
Number of days                        & 1              \\
Number of grids considered for the MC & 2              \\
Number of grids considered per SC     & 1              \\
Bandwidth for the MC                  & 20 MHz         \\
Bandwidth per SC                      & 20 MHz         \\

\bottomrule
\end{tabular}
\end{table}
\subsection{Data Set}\label{sec:dataset}
In order to calculate the power consumption through~\eqref{eq:earthPower}, $\lambda$ is required for each BS.
In this regard, to obtain $\lambda$ values, we use a real call detail records~(CDR) data set provided by Telecom Italia\footnote{Data is available online at https://dandelion.eu/datamine/open-big-data/.}, in which the city of Milan is divided into 10,000 square-shaped grids with a dimension of 235x235 meters.
Then, within each grid, user call, text message, and internet activity levels were recorded with 10-minute resolution for a two months period~(November and December 2013).
Even though the data set consists only of unitless activity level values and there is no information provided regarding the data processing phase, the activity levels can be interpreted as grid-wise relative traffic loads, since they represent the volume of user-mobile network operator interaction at each time slot.
In the data processing phase of this work, first, we combine aforementioned separated activity levels~(i.e., call, text message, and internet).
Then, we randomly\footnote{The reason for choosing the grids in a random manner is to avoid similar activity level characteristics of the grids that have close proximity.} pick two grids for the MC and one grid for each SC.
Note that the activity levels at the two grids selected for the MC are further combined to create a traffic load data.
After that, all the activity levels are normalized together between 0 and 100, and the obtained values are treated as traffic loads for each cell.
\begin{table}[h]
    \centering
    \caption{Power profiles for BSs~\cite{Auer2011}}\label{table:power_profile}
    \begin{tabular}{ccccc}
        \toprule
        \multirow{3}{*}{BS Type} & \multirow{2}{*}{Efficiency}& \multicolumn{3}{c}{Power Consumption [W]} \\
                                 & & Transmit  & Operational  & Sleep   \\
                                 & $\eta_j$ & $P_{\text{T},j}$ & $P_{\text{o},j}$ & $P_{\text{s},j}$ \\
                              \midrule
        Macro & 4.7 & 20   & 130 & 75  \\
        RRH   & 2.8 & 20   & 84  & 56  \\
        Micro & 2.6 & 6.3  & 56  & 39  \\
        Pico  & 4.0 & 0.13 & 6.8 & 4.3 \\
        Femto & 8.0 & 0.05 & 4.8 & 2.9 \\
        \bottomrule
    \end{tabular}
\end{table}

\subsection{Benchmarking}\label{sec:bencmark}
Four different benchmark methods are used for comparison purposes, and they will be elaborated individually in the following paragraphs.
\subsubsection{Sorting}\label{sec:sorting}
Inspired by~\cite{sort_1, sort_2}, the sorting algorithm is developed to compare the results of the proposed algorithm.
In this method, the SCs are sorted in ascending order based on their load factors, $\lambda$.
Then, they are switched off sequentially until there is no available capacity left at the MC, and the rest of the SCs are kept ON.
Given the power consumption profile in~\eqref{eq:earthPower} and the characteristics of different types of BSs in Table~\ref{table:power_profile}, the MC consumes more power than the SCs for the same value of $\lambda$.
As such, it is wiser to switch off a SC with smaller traffic load in order to save more energy.
This concept lies at the heart of the sorting algorithm, since it aims at minimizing the energy consumption of the network.
On the other hand, as the SC switching off is performed only when there is enough capacity at the MC, this method also guarantees the service of the users after the offloading process.
\subsubsection{All-ON Method}\label{sec:allon}
There is no switching implemented in this method, meaning that all the SCs are always kept ON.
Accordingly, no offloading is needed as well in this case provided that none of the SCs are switched off at any time.
Therefore, it can be inferred that there is no concern of QoS in this method, since all the users are served by the BSs~(either MC or SC) that they were associated in the first place.
\subsubsection{All-OFF Method}\label{sec:alloff}
In this methods, the SCs are always kept switched off and their data traffics are offloaded to the MC.
However, this method is performed blindly, meaning that the data traffic of the SCs are offloaded to the MC regardless of its available capacity.
This means that the users, which are normally served by SCs, are vulnerable to service disruptions, since there is no guarantee that they will be served by the MC. 
Even if the service is provided by the MC, the QoS would be reduced in case there are more users than the available capacity, and in that case the MC reduces the available resources for each user by certain amount in order to keep all the users served.
\subsubsection{Exhaustive Search}\label{sec:exhaustive}
Exhaustive search is a method that tries to find the best policy among the set of all possible switching options consisting of the OFF/ON states of the SCs.
In particular, given the available capacity of the MC as a constraint, this method searches for the option with the least energy consumption.
Hence, this method guarantees the service for each user in the case of offloading, which in turn prevents the QoS of the users from being violated.
Note that exhaustive search returns the optimum policy, and thus the objective of any algorithm should be to mimic it as much as possible.
\subsection{Performance Metrics}\label{sec:metrics}
In this section, the metrics, which are used to evaluate the performance of the proposed algorithm and the benchmark methods are presented.
\subsubsection{Gain}\label{sec:gain}
In this work, we are interested in the percentage gain on the total energy consumption compared to the all-ON method.
It is calculated as
\begin{equation}\label{eq:gain}
    G = \dfrac{E_\text{on}-E_x}{E_\text{on}}\times 100,
\end{equation}
where $E_\text{on}$ and $E_x$ are the total energy consumption in joules with all-ON method and with one of the other methods, such that
\begin{equation}
    E_x\in\{E_\text{on}, E_\text{es}, E_\text{sort}, E_\text{vfa}\},
\end{equation}
where $E_\text{off}$, $E_\text{es}$, $E_\text{sort}$, and $E_\text{vfa}$ are the total energy consumption in joules with all-OFF, exhaustive search, sorting, and the proposed VFA-based methods, respectively.
\subsubsection{Power Consumption}\label{sec:powerconsumption}
Power consumption in watts during the simulation time are obtained for each method.
This is a beneficial metric to evaluate the performance of the methods, since it reflects the variations in power consumption for different times of a day.
Moreover, given that the gain is calculated on the energy consumption by accumulating the power consumption during the simulations, which can also be interpreted as upsampling, the detailed behaviours of the developed methods are kind of lost.
Thus, power consumption is also a utilitarian metric that paves the way for detailed behavioural observations.
\subsubsection{Average Network Throughput}\label{sec:tput}
The activity levels provided in the data set are assumed to be throughput demands with a unit of Mbps.
As such, $\lambda_i$ is treated as the normalized throughput demand from cell $i$, and then the holistic network throughput is calculated by combining the normalized throughput demands, $\bar{T}$, from each cell as follows:
\begin{equation}\label{eq:tput}
    \bar{T}(t)=\sum_{i=1}^{s+1}\lambda_i.
\end{equation}

However, there is one caveat that since the backhaul capacity of the cells is limited by the installed backbone, and thus cannot be exceeded, a BS $B_i$ penalizes the throughput for each user by $\Upsilon_i$ when the combined demand of the users exceeds $T_{\text{p},i}$, the maximum installed capacity of $B_i$, such that
\begin{equation}\label{eq:user_resource}
    \hat{r}_{\text{u},i}(t) = r_{\text{u},i}(t) - \Upsilon_i(t),
\end{equation}
where $r_{\text{u},i}(t)$ and $\hat{r}_{\text{u},i}(t)$ are the average user throughputs for users allocated by $B_i$ at time $t$ before and after penalization, respectively.

The penalty, $\Upsilon_i$, is calculated as 
\begin{equation}\label{eq:tput_penalty}
    \Upsilon_i(t)  = 
    \left\{
        \begin{array}{cc}
            \dfrac{T_{\text{r},i}(t)-T_{\text{p},i}}{N_{\text{u},i}(t)}, & T_{\text{r},i}(t) >T_{\text{p},i}\\
            0,                            &   \text{otherwise,}
         \end{array}
    \right.
\end{equation}
where $N_{\text{u},i}(t)$ is the number of users served by $B_i$ at time $t$ and
\begin{equation}\label{eq:resource_fact}
    T_{\text{r}, i}(t) = r_{\text{u},i}(t) N_{\text{u},i}(t) 
\end{equation}
is the combined required throughput of users allocated by $B_i$ at time $t$.
Next, as explained in Appendix~\ref{app:tput}, \eqref{eq:tput} can be manipulated and normalized, resulting in
\begin{equation}\label{eq:tput_for_ref}
    \bar{T}(t)=\sum_{i=1}^{s+1}u(-\lambda_i(t)+1)\lambda_i(t) + u(\lambda_i(t)-1),
\end{equation}
where $\bar{T}(t)$ is the normalized throughput of the network and $u(\cdot)$ is the unit step function, such that
\begin{equation}
 u(x)  = 
    \left\{
        \begin{array}{cc}
            1, & x\geq0\\
            0, &  x<0.
         \end{array}
    \right.
\end{equation}
\subsection{Scenarios}
The developed benchmark methods and the proposed VFA-based switching algorithm are tested in two different scenarios, namely Scenario A and Scenario B.
\subsubsection{Scenario A}
This is a simplistic scenario, where there is only one type of SC~(micro) in the network.
Moreover, the sleep mode power consumption for the SCs are assumed to be zero, such that the SCs are not contributing to the total power consumption of the network at all when they are switched off.
\subsubsection{Scenario B}
This is a complex scenario, comprising four different types of SCs, e.g. micro, remote radio head~(RRH), pico, and femto, are deployed in the network, and the number of SCs are distributed in these four types almost\footnote{We are changing the number of SCs in the network to observe the impacts of SC volume on the performance, and it is not always possible to distribute them equally, since the amount of SCs are sometimes not divisible by four, which is the number of SC types in the network. For example, when the number of SCs are 13, then the distribution becomes 3,3,3,4.} equally.
Moreover, the sleep mode power consumption is not assumed to be zero in this scenario, instead the values in Table~\ref{table:power_profile} are used.
Therefore, this scenario is more realistic than Scenario A, as in real networks, there are heterogeneous combinations of SCs and the sleep mode power consumption is not zero.
\subsection{Results}
\begin{figure}
	\centering
	\subfloat[Results for Scenario A when $\kappa=20$.]{%
		\resizebox{.5\columnwidth}{!}{
%
%
\definecolor{mycolor1}{rgb}{0.00000,0.44700,0.74100}%
\definecolor{mycolor2}{rgb}{0.85000,0.32500,0.09800}%
\definecolor{mycolor3}{rgb}{0.92900,0.69400,0.12500}%
\definecolor{mycolor4}{rgb}{0.49400,0.18400,0.55600}%
\definecolor{sort}{rgb}{0.46600,0.67400,0.18800}%
\begin{tikzpicture}

\begin{axis}[%
width=4.521in,
height=3.566in,
at={(0.758in,0.481in)},
scale only axis,
unbounded coords=jump,
xmin=4,
xmax=28,
xtick = {1,4,8,12,...,28},
xticklabels={1,4,8,12,...,28},
xlabel style={font=\color{white!15!black}},
xlabel={Number of small cells, $s$},
ymin=30,
ymax=77.8321507914959,
ylabel style={font=\color{white!15!black}},
ylabel={Gain on energy consumption, $G$ [\%]},
axis background/.style={fill=white},
xmajorgrids,
ymajorgrids,
legend style={at={(0.01,0.99)}, anchor=north west, legend cell align=left, align=left, draw=white!15!black},
label style={font=\Large},
ticklabel style={font=\large},
legend style={font=\Large}
]
\node[draw=black,minimum width = 5.93cm, minimum height = 30cm,fill=gray!20,opacity=.5] at (axis cs: 8.8,10){};
\addplot [color=mycolor1, line width=2.0pt, mark=*, mark size=4.5pt, mark options={solid, fill=white}]
  table[row sep=crcr]{%
1	17.5006539158714\\
4	40.3714067422681\\
8	53.5358128766169\\
12	60.9740531650912\\
16	65.2679510395927\\
20	68.7649796280913\\
24	72.833291858101\\
28	75.50577413708\\
30	76.6590064567365\\
};
\addlegendentry{All-OFF}

\addplot [color=mycolor3, line width=9.0pt]
  table[row sep=crcr]{%
1	16.5696861105565\\
4	34.0115700390562\\
8	42.4003996899367\\
12	45.8136199001857\\
15	45.7808787981661\\
16	nan\\
17	nan\\
18	nan\\
19	nan\\
20	nan\\
21	nan\\
22	nan\\
23	nan\\
24	nan\\
25	nan\\
26	nan\\
27	nan\\
28	nan\\
29	nan\\
30	nan\\
};
\addlegendentry{ES}

\addplot[color=mycolor4, line width=2.0pt, mark=square*, mark size=5pt, mark options={solid, fill=white}]
  table[row sep=crcr]{%
1	17.1782365511931\\
4	33.6599729685753\\
8	42.0216001601075\\
12	44.9574090488135\\
16	44.3304004490467\\
20	46.4035776670887\\
24	49.8962561276734\\
28	51.4426828688249\\
30	52.0519774114589\\
};
\addlegendentry{VFA}

\addplot [color=sort, line width=2.0pt, mark=diamond*, mark size=6pt, mark options={solid, fill=white, fill opacity=.8}]
  table[row sep=crcr]{%
1	16.563695954852\\
4	34.0115700390562\\
8	42.4003316896783\\
12	45.8136199001857\\
16	45.6218746177399\\
20	47.826734889299\\
24	51.354875058288\\
28	52.9497002088397\\
30	53.5759129347362\\
};
\addlegendentry{Sorting}

\node[anchor=north west, red, align=center] at(axis cs:15,60) (source) {\large ES \\ \large Terminated};
\draw[-latex, ultra thick, red, dashed, shorten >= 1.5em](source) -- (axis cs:15, 44.2707242975697);

\coordinate (a) at (axis cs:8,35);
\coordinate (b) at (axis cs:24,35);
\draw[->, >=latex, blue!20!white, line width=20pt, opacity=.9]   (a) to node[red]{\large Complexity increases} (b);

\end{axis}
\end{tikzpicture}
	\subfloat[Results for Scenario B when $\kappa=10$.]{%
		\resizebox{0.5\columnwidth}{!}{
%
%
\definecolor{mycolor1}{rgb}{0.00000,0.44700,0.74100}%
\definecolor{mycolor2}{rgb}{0.85000,0.32500,0.09800}%
\definecolor{mycolor3}{rgb}{0.92900,0.69400,0.12500}%
\definecolor{mycolor4}{rgb}{0.49400,0.18400,0.55600}%
\definecolor{sort}{rgb}{0.46600,0.67400,0.18800}%
\begin{tikzpicture}

\begin{axis}[%
width=4.521in,
height=3.566in,
at={(0.758in,0.481in)},
scale only axis,
unbounded coords=jump,
xmin=4,
xmax=28,
xtick = {4,8,12,...,28},
xticklabels={4,8,12,...,28},
ytick = {-5,0,5,10,...,20},
yticklabels={-5,0,5,10,...,20},
xlabel style={font=\color{white!15!black}},
xlabel={Number of small cells, $s$},
ymin=-5,
ymax=20,
ylabel style={font=\color{white!15!black}},
ylabel={Gain on energy consumption, $G$ [\%]},
axis background/.style={fill=white},
xmajorgrids,
ymajorgrids,
legend style={at={(0.01,0.99)}, anchor=north west, legend cell align=left, align=left, draw=white!15!black},
label style={font=\Large},
ticklabel style={font=\large},
legend style={font=\Large}
]
\node[draw=black,minimum width = 5.93cm, minimum height = 30cm,fill=gray!20,opacity=.5] at (axis cs: 8.8,10){};
\addplot [color=mycolor1, line width=2.0pt, mark=*, mark size=4.5pt, mark options={solid, fill=white}]
  table[row sep=crcr]{%
1	0.374332183467899\\
4	-3.05277958997035\\
8	-2.11371034764244\\
12	0.5123287744183\\
16	2.14151936566089\\
20	4.06762914114043\\
24	7.28758914420393\\
28	9.26351346620407\\
30	10.6196061552409\\
};
\addlegendentry{All-OFF}

\addplot [color=mycolor3, line width=9.0pt]
  table[row sep=crcr]{%
1	2.83543656250737\\
4	7.55489594666953\\
8	11.319246739909\\
12	12.9809193957307\\
15	14.0660945722313\\
16	nan\\
17	nan\\
18	nan\\
19	nan\\
20	nan\\
21	nan\\
22	nan\\
23	nan\\
24	nan\\
25	nan\\
26	nan\\
27	nan\\
28	nan\\
29	nan\\
30	nan\\
};
\addlegendentry{ES}

\addplot [color=mycolor4, line width=2.0pt, mark=square*, mark size=5pt, mark options={solid, fill=white}]
  table[row sep=crcr]{%
1	2.80746071304725\\
4	7.4569933048607\\
8	11.1352057830913\\
12	12.7799492076566\\
16	13.660195676892\\
20	14.6139715958432\\
24	15.9348142717074\\
28	16.8449130024169\\
30	17.342177206423\\
};
\addlegendentry{VFA}

\addplot  [color=sort, line width=2.0pt, mark=diamond*, mark size=6pt, mark options={solid, fill=white}]
  table[row sep=crcr]{%
1	1.18791070951607\\
4	2.624356203228\\
8	6.08223371001838\\
12	7.88566677625403\\
16	8.98396592112328\\
20	10.2183397398154\\
24	11.9668560147567\\
28	12.9450217422676\\
30	13.4173230746195\\
};
\addlegendentry{Sorting}

\node[anchor=north west, red, align=center] at(axis cs:15,20) (source) {\large ES\\ \large Terminated};
\draw[-latex, ultra thick, red, dashed, shorten >= 1em](source) -- (axis cs:15, 14.0660945722313);

\coordinate (a) at (axis cs:8,-2.5);
\coordinate (b) at (axis cs:24,-2.5);
\draw[->, >=latex, blue!20!white, line width=20pt, opacity=.9]   (a) to node[red]{\large Complexity increases} (b);

\end{axis}
\end{tikzpicture}
	\caption{Percentage gain performances compared to the all-on method for Scenario A and Scenario B.}\label{fig:gain}
\end{figure}
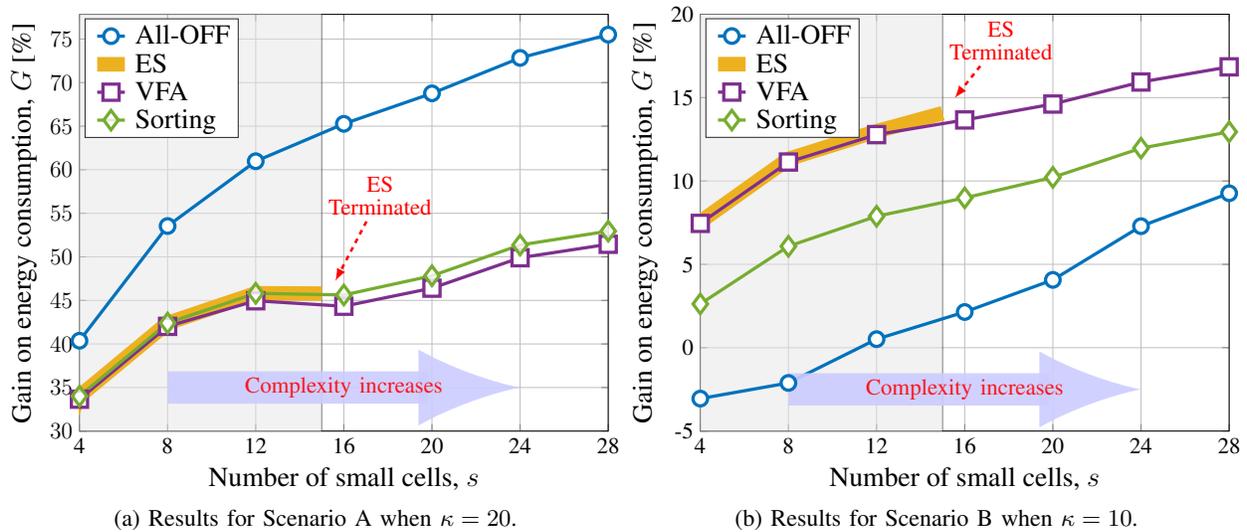
As explained in Section \ref{sec:gain}, Fig.~\ref{fig:gain} demonstrates how much gain in energy consumption is obtained when all-OFF, exhaustive search, sorting, and proposed VFA-based methods are compared to all-ON method.
Fig.~\ref{fig:gain_sc_a} shows the gain results for Scenario A, while the results for Scenario B are presented in Fig.~\ref{fig:gain_sc_b}.
Note that since the exhaustive search method is computationally demanding with $O(2^s)$, where it doubles the elapsed time when $s$ is incremented by 1, it is allowed to run only until $s=15$ for both Fig.~\ref{fig:gain_sc_a} and Fig.~\ref{fig:gain_sc_b}.
The idea in Fig.~\ref{fig:gain} is that we first compare the performances of all-OFF, sorting, and the proposed method to the exhaustive search, which is optimum, in a smaller network due to the time requirement for exhaustive search when $s$ increases.
Then, once we have an idea about the performance of the methods compared to the optimum, we terminate the exhaustive search and keep the other methods running in order to observe further behaviours in larger scale networks.

It can be seen in Fig.~\ref{fig:gain_sc_a} that the sorting method is working exactly same with the exhaustive search, while the proposed method follows them quite closely.
These outcomes can lead to a conclusion that the proposed method is outperformed by a more simpler and straightforward algorithm, and one can question the validity of the proposed method.
However, when Fig.~\ref{fig:gain_sc_a} is reconsidered together with Fig.~\ref{fig:gain_sc_b}, it can be inferred that the better performance of the sorting algorithm is not generalized.
While it gives promising results in simplistic Scenario A, it starts under-performing in the complex and realistic Scenario B.
On the other hand, given that the proposed VFA-based method gives very close results to the exhaustive search in both scenario, it seems quite immune to the changes in the scenario with generalized good performance.
In other words, although there might be simpler alternatives to the proposed method in simple scenarios, which are mostly unrealistic, the proposed method takes the advantage of being capable of generalization and works properly even when the scenario becomes more complicated and realistic owing to the nature of the developed VFA-based RL algorithm.
This, in turn, makes the proposed method work properly regardless of the conditions, while the sorting method, for example, relies on the simplicity of the scenario, making it impractical for realistic scenarios, where the assumptions in Scenario A are no more valid.

Another point about Fig.~\ref{fig:gain} is that the gain decreases significantly when the scenario is switched from A to B.
For the proposed method, for instance, the gain drops from around 52\% to 17\% when $s=30$, which yields around 67\% reduction.
This is mainly due to more heterogeneity of Scenario B, where there are four types of SCs.
Scenario A includes only micro cell, which is the second most energy consuming SC after RRH according to Table~\ref{table:power_profile}, making the SCs in Scenario A consume a considerable amount of energy.
For Scenario B, on the other hand, there are four types of SCs with distinctive power profiles, and thus total power consumption decreases because of the inclusion of pico and femto cells, which consume small amount of energy, in the network.
Besides, while SCs still consume energy when they are switched off in Scenario B, the sleep mode power consumption is assumed to be zero for Scenario A, making it consume overall less energy when switching is performed.

The last point that is worthy to discuss about the findings in Fig.~\ref{fig:gain} is related to the all-OFF method.
While it outperforms all the other method in terms of gain in Fig.~\ref{fig:gain_sc_a}, it becomes the worst-performing method in Fig.~\ref{fig:gain_sc_b}.
The reason behind this phenomenon is again the characteristic diversity between Scenarios A and B.
There is only micro cell, which is demanding in energy, in Scenario A, and thus switching off SCs almost always result in less energy consumption, whereas, due to the heterogeneity of Scenario B, the optimal policies for switching off are different from each other for each type of SC.
In particular, \eqref{eq:theorem} holds for larger $\lambda_j,~ j>1$ values when the type of SC goes from femto cell to RRH.
This means that the number of cases, where switching off is profitable, is larger for micro cell than that of femto and/or pico cells.
Therefore, since the all-OFF method switches off all the available SCs regardless of their types, the overall process becomes less profitable in Scenario B when compared to Scenario A.

\begin{figure}
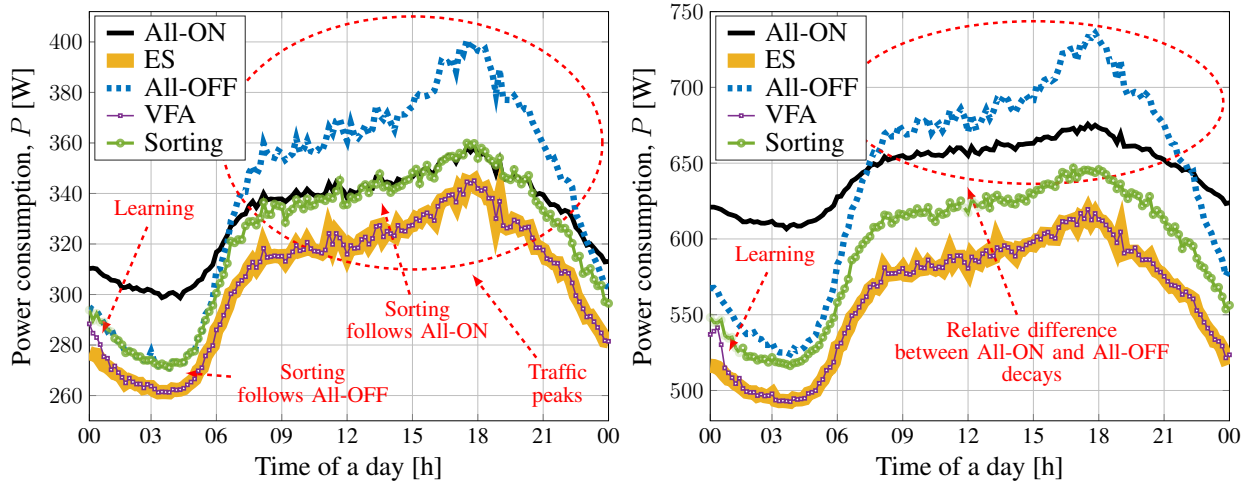
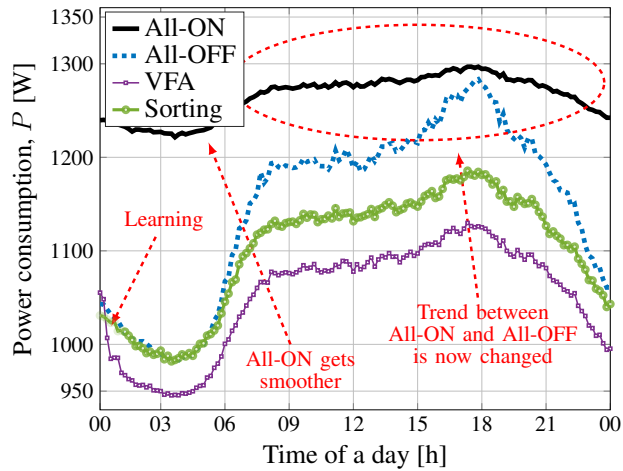

	\centering
	\subfloat[Averaged power consumption results over 25 rounds when $s=4$.]{%
		\resizebox{0.5\columnwidth}{!}{\input{images/power_sc_4_avg.tikz}}\label{fig:power_sc_4}}
	\subfloat[Averaged power consumption results over 25 rounds when $s=12$.]{%
		\resizebox{0.5\columnwidth}{!}{\input{images/power_sc_12_avg.tikz}}\label{fig:power_sc_12}}
		
	\subfloat[Averaged power consumption results over 25 rounds when $s=28$.]{%
		\resizebox{0.5\columnwidth}{!}{\input{images/power_sc_28_avg.tikz}}\label{fig:power_sc_28}}
	\caption{Power consumption performances of the developed methods for various number of SCs.}\label{fig:power_cons}
\end{figure}

Fig.~\ref{fig:power_cons} reveals the power consumption behaviours of the developed methods for various numbers of SCs included in the network.
Similar to the results in Fig.~\ref{fig:gain}, for consistency, we present the power consumption results for integer multiples of four, where the types of SCs are distributed equally.
Hence, $s$ value is altered as $s=4$, $s=12$, and $s=28$, respectively.
Note that Fig.~\ref{fig:power_sc_28} does not include the exhaustive search method, since it is allowed to run until $s=15$ for computational complexity reasons.

Fig.~\ref{fig:power_sc_4} demonstrates the power consumption performance of the developed methods when $s=4$.
The findings suggest that the proposed VFA based algorithm manages to mimic the exhaustive search almost perfectly apart from the initial learning phase.
Given the computational complexities for the exhaustive search and the proposed method are $O(2^s)$ and $O(s)$, respectively, these results confirm that the proposed method performs quite well~(i.e., producing near-optimal results) with drastic decrease in computational complexity.

It is also worth discussing the learning phase of VFA algorithm, which is common in all the three cases of $s$.
This behaviour is expected as we propose an online learning framework, where VFA-based cell switching method is deployed without any prior knowledge, and thus it learns by interacting with the real environment.
In other words, the promising performance after the initial learning phase is due to the experience obtained during the training.
The slightly worse initial performance is due the fact that VFA takes more random actions in the beginning in order to increase the knowledge about the environment.
This process is referred to as \textit{exploration}.
Then, after the exploration, the randomness in the actions taken decreases with the number of episodes in order to let the VFA start using the information it received, which is known as \textit{exploitation}.
One important point here is that the exploration process in the developed model takes a short amount of time, making the online implementation feasible, since longer learning phases would undermine the advantage of the VFA based solutions.
The reason behind preferring the online implementation over the offline one is that the former case is model-free, where it does not require any prior knowledge, while a full environmental knowledge is needed in the latter.
This, in turn, renders the online implementation more practical.
Therefore, even though the offline implementation is free from the possible negative impacts of the training process, it is comparatively less functional in real scenario, where full prior knowledge is often inaccessible.

Another interesting aspect that can be deduced from Fig.~\ref{fig:power_sc_4} is the behaviour of the sorting method.
It mirrors the all-OFF method in the beginning, where the data traffic is relatively low, whereas it starts following the all-ON method when the traffic load increases.
This is because the sorting method manages to switch off most~(even all) SCs, which is similar~(same) behaviour with the all-OFF method, when the traffic load is lighter, since there is more offloading opportunities in the MC owing to the low traffic patterns.
Moreover, the amount of offloaded data is also comparatively less at these times.
When the traffic volume becomes higher, on the other hand, since the amount of data to be offloaded and the occupancy of the MC increases simultaneously, the switching off becomes much harder~(even impossible), which is quite similar to the behaviour of the all-ON method.

Fig.~\ref{fig:power_sc_12} presents the power consumption results when $s=12$.
Similar to Fig.~\ref{fig:power_sc_4}, we observe the sub-optimal results of VFA during the training phase.
However, other than being short in time, the appreciable thing about this training phase is that the results that VFA produces are still reasonable even though it is not optimal.
Another interesting observation is the relative difference between all-ON and all-OFF methods shrinks compared to the one in Fig.~\ref{fig:power_sc_4}.
Taking into account the peak points, the relative difference between the all-OFF method and all-ON method decreases by around 25\% when $s$ increases from 4 to 12.

Two important questions arise from these findings.
First, why does the all-OFF method result in more power consumption than the all-ON method?
It is counter-intuitive to observe such results where switching all the SCs off causes more power consumption than always keeping all the SCs ON.
Moreover, we observe that all-ON outperforms all-OFF especially when the traffic loads are higher.
The reason behind this is that, as repeated previously, considering~\eqref{eq:theorem} together with Table~\ref{table:power_profile}, it is usually non-profitable to switch off SCs when the traffic load is above some certain threshold, which is different for each type of SC.
Therefore, it is not a rule of thumb that the switching off is always resulting in less power consumption.
The outcomes in Fig.~\ref{fig:gain_sc_b}, where the all-OFF method gives negative gains, also confirm this conclusion.
Nonetheless, this is not the only condition that makes the all-ON more favorable than the all-OFF method in terms of power consumption.
It is also the intensity of the SCs in the network.
Since $s$ is not large enough in both Figs.~\ref{fig:power_sc_4} and \ref{fig:power_sc_12}, the contribution of the SCs to the power consumption is comparatively less than that of the MC, therefore, the overall energy saving resulting from switching off SCs cannot prevail against the loss caused by offloading traffic to the MC.

Second, why does the relative difference between the all-OFF and all-ON methods decay when $s$ rises up from 4 to 12?
The answer for this question is related to the last discussion for the previous question; since the intensity of the SCs increases in the network with increasing $s$, the dominance of the MC in the total power consumption scales down.
This subsequently renders the gain resulting from switching off more significant, and thus the all-OFF method starts being more reasonable.
Hence, the number of instances that all-ON outperforms the all-OFF method also decreases when $s$ is increased from 4 to 12.
The results in Fig.~\ref{fig:gain_sc_b} again supports this conclusion, as the percentage gain enhances with increasing $s$.

Fig.~\ref{fig:power_sc_28} showcases the power consumption results when $s=28$.
It is again worth noting that exhaustive search is not included this time, since it is run until $s=15$, owing to the computational complexity concerns.
Unlike the results in Figs.~\ref{fig:power_sc_4} and Fig.~\ref{fig:power_sc_12}, there is no point in Fig.~\ref{fig:power_sc_28}, where all-OFF outperforms all-ON.
Similar to the previous discussions on this topic, the distribution of the total power consumption among the SCs and the MC is an integral aspect of the performances of the all-ON and all-OFF methods.
Given that the SCs now consume considerable amount of energy due to their increased number, the gain obtained from switching off the SCs, which are profitable to switch off according to \eqref{eq:theorem}, prevails over the loss incurred by switching off the SCs, which are non-profitable to switch off.

Another interesting point about the findings in Fig.~\ref{fig:power_sc_28} is that the power consumption for the all-ON method becomes smoother compared to the results in Figs.~\ref{fig:power_sc_4} and \ref{fig:power_sc_12}.
The relative peak-to-peak difference, for example, was around 20\% when $s=4$, whereas it drops to 6.2\% when $s$ increased to 28.
This is again because of the MC loosing its dominance in the total power consumption.
While, when fully-loaded, 72.3\% of the power consumption comes from the load dependent part for the MC, this rate is around 28\% on average~(minimum: femto cell with 8.3\%, maximum: RRH with 66.7\%) for the SCs.
In other words, the MC consumes more on the load dependent part, whereas SCs consume more on the static power.
Therefore, for the smaller $s$ values, the load dependent power consumption is higher as the MC is the main contributor to the total power consumption, while the load dependent power consumption becomes relatively less and the static power gets more significant for higher values of $s$.

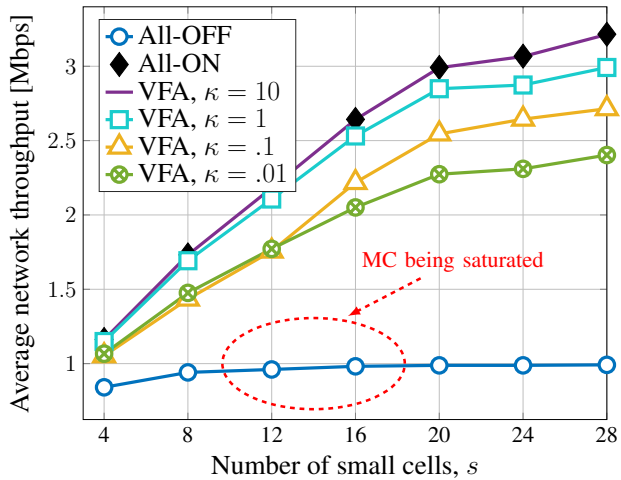
\begin{figure}
    \centering
    \resizebox{0.5\columnwidth}{!}{
%
%
\definecolor{mycolor1}{rgb}{0.00000,0.44700,0.74100}%
\definecolor{mycolor2}{rgb}{0.85000,0.32500,0.09800}%
\definecolor{mycolor3}{rgb}{0.92900,0.69400,0.12500}%
\definecolor{mycolor4}{rgb}{0.49400,0.18400,0.55600}%
\definecolor{mycolorx}{rgb}{0.1,0.8,0.8}%
\definecolor{mycolor5}{rgb}{0.46600,0.67400,0.18800}%
\begin{tikzpicture}

\begin{axis}[%
width=4.521in,
height=3.566in,
at={(0.758in,0.481in)},
scale only axis,
xmin=3,
xmax=28,
xtick = {1,4,8,12,...,28},
xticklabels={1,4,8,12,...,28},
xlabel style={font=\color{white!15!black}},
xlabel={Number of small cells, $s$},
ymin=0.623847406713118,
ymax=3.4,
ylabel style={font=\color{white!15!black}},
ylabel={Average network throughput [Mbps]},
axis background/.style={fill=white},
xmajorgrids,
ymajorgrids,
legend style={at={(0.03,0.97)}, anchor=north west, legend cell align=left, align=left, draw=white!15!black},
label style={font=\Large},
ticklabel style={font=\large},
legend style={font=\Large}
]
\addplot [color=mycolor1, line width=2.0pt, mark=*, mark size=4.8pt, mark options={solid, fill=white}]
  table[row sep=crcr]{%
4	0.840695923043494\\
8	0.940944364318826\\
12	0.960662534110519\\
16	0.981829461086106\\
20	0.988309145553572\\
24	0.988316123049098\\
28	0.991674956180438\\
};
\addlegendentry{All-OFF}

\addplot [color=black, only marks, mark=diamond*, mark size=7.5pt, mark options={solid, fill=black}]
  table[row sep=crcr]{%
4	1.16233185099296\\
8	1.73134355473153\\
12	2.18279071138308\\
16	2.64269237300819\\
20	2.99187899797488\\
24	3.06801073025422\\
28	3.21539999778351\\
};
\addlegendentry{All-ON}

\addplot [color=mycolor4, line width=2.0pt]
  table[row sep=crcr]{%
4	1.16229733615803\\
8	1.73092404513224\\
12	2.1818545387318\\
16	2.64161222052045\\
20	2.99025010440592\\
24	3.06548270410538\\
28	3.21395323116818\\
};
\addlegendentry{VFA, $\kappa=10$}

\addplot [color=mycolorx, line width=2.0pt, mark=square*, mark size=5pt, mark options={solid, fill=white}]
  table[row sep=crcr]{%
4	1.14856594581061\\
8	1.69207123448963\\
12	2.10703166019912\\
16	2.5311341297155\\
20	2.84894110651153\\
24	2.87396171040514\\
28	2.99182031114209\\
};
\addlegendentry{VFA, $\kappa=1$}

\addplot [color=mycolor3, line width=2.0pt, mark=triangle*, mark size=7.5pt, mark options={solid, fill=white}]
  table[row sep=crcr]{%
4	1.05248141185505\\
8	1.4341128232636\\
12	1.75664459940772\\
16	2.21805576692296\\
20	2.54487712969302\\
24	2.6442842161662\\
28	2.71457823814678\\
};
\addlegendentry{VFA, $\kappa=.1$}

\addplot [color=mycolor5, line width=2.0pt, mark=otimes*, mark size=4.8pt, mark options={solid, fill=white}]
  table[row sep=crcr]{%
4	1.06684680116391\\
8	1.47571254970925\\
12	1.77195590599314\\
16	2.05054784456982\\
20	2.27388193587175\\
24	2.31007176839534\\
28	2.4028737167164\\
};
\addlegendentry{VFA, $\kappa=.01$}

\draw[ultra thick, red, dashed] (14, 1) ellipse (2cm and 1cm);
\node[anchor=north west, align=center, red] at(axis cs:16,1.8) (source) { \large MC being saturated};
\draw[-latex, ultra thick, red, dashed, shorten >= 2em](source) -- (axis cs:14, 1.2);

\end{axis}
\end{tikzpicture}
    \caption{Results for normalized network throughput against the number of SCs. The network throughput is calculated by averaging out the obtained throughput values at each time slot during the simulation period.}\label{fig:tput}
\end{figure}
Fig.~\ref{fig:tput} shows normalized average network throughput, which is calculated through~\eqref{eq:tput_for_ref}, for various $\kappa$ values of VFA, the all-ON and the all-OFF methods.
Furthermore, the activity levels in the data set are assumed to be in Mbps after the pre-processing detailed in Section~\ref{sec:dataset}.
The objective of demonstrating these results is to highlight the impact of $\kappa$ value on the performance of the proposed VFA based switching algorithm.
In addition, the findings also display the cost of switching off all the SCs without taking into account the available capacity at the MC.
As such, the results suggest that there is an upper bound for the all-OFF method, since it only relies on the capacity of the MC, and after normalization, each BS~(MC or SC) has a capacity of unity at maximum.
Given that the MC is the only BS that is kept ON, all-OFF switches off all the SCs and offloads their traffic to the MC, meaning that it has one unit of capacity available in the network.
Therefore, these results also confirm that having a blind policy, that is, acting without considering the environmental conditions and/or constraints, is not a wise idea, since it results in the degradation of the QoS of users as well as being more costly in power consumption on some occasions, as already proven in Figs.~\ref{fig:gain_sc_b}, \ref{fig:power_sc_4}, and \ref{fig:power_sc_12}.
The purpose of presenting the results for the all-ON method is to demonstrate the best case~(upper bound) that can be achieved in terms of throughput.

As seen in Fig.~\ref{fig:tput}, we obtained promising results for the proposed algorithm.
It gives quite close results to the all-ON method, proving a good performance as it is producing similar results to the best case.
By considering the findings in Figs.~\ref{fig:gain}, \ref{fig:power_cons}, and \ref{fig:tput} together, it is easy to deduce that the proposed VFA based switching algorithm performs outstandingly well in terms of both power consumption and the throughput, since it reduces the power consumption~(similar to exhaustive search) without compromising on the QoS of the users~(similar to the all-ON method).
Provided that exhaustive search and all-ON are the best methods in terms of power consumption and QoS, respectively, the proposed method combines the advantages of both methods.

Fig.~\ref{fig:tput} also showcases the impact of $\kappa$ in \eqref{eq:vfa_cost} on the performance of the developed VFA model.
The results suggest that the throughput performance of the proposed VFA decreases with decreasing values of $\kappa$.
This is because higher values of $\kappa$ will result in more cost been incurred for the case when the demanded capacity exceeds the available capacity at the MC.
As explained in Section~\ref{sec:metrics}, when the demanded capacity is higher than the available one, the networks reduces the allocated bandwidth for each user in order to accommodate the demands of all users.
In this regard, the agent refrains from switching off a SC, whose demanded capacity is larger than the available one at the MC, which in turn helps to keep the QoS above the required level.
However, for the smaller values of $\kappa$, the agent starts following more relaxed policies on the given constraint, where it takes more actions that are against the above-mentioned demanded/available capacity criterion.
Hence, the obtained throughput starts decreasing for lower values of $\kappa$.
For the extreme scenario, where $\kappa=0$, then \eqref{eq:vfa_cost} becomes $C=P$, meaning that the agent only focuses on the total power consumption and does not care about the constraint of available capacity at the MC.
Intuitively, the agent would be reducing the total power consumption as much as possible at the expense of QoS degradation. 

\section{Conclusion}\label{sec:conclusion}
In this paper we presented a RL-based solution for cell switching, which is capable of learning the best policy in a dense HetNet environment in order to save energy and satisfy the QoS at the same time.
We evaluated our solution using real data from Milan, Italy and compared it with various benchmark methods.
The results in terms of power and energy consumption show that the proposed method can perform just as well as the exhaustive search method, which produces the optimum solutions, regardless of the complexity and size of the given scenario/environment.
Moreover, the proposed method resulted in much fewer computations than that of exhaustive search, meaning that it is a scalable method.
Furthermore, network throughput was also measured, and we observed that the proposed method gave similar results with the all-ON method, which is the best in terms of the QoS due to the fact that it does not include any offloading.
Therefore, the proposed VFA based cell switching method resulted in a significant reduction in the network's energy consumption without much compromise on the QoS, thus making it suitable for practical application.
In future, we plan to include traffic load predictions of the BSs as well as user mobility predictions in order to reduce the time spent in the switching process.

\appendices
\section{Normalized Network Throughput}\label{app:tput}
The throughput for $B_i$ at time $t$ can be expressed as the product of the average user throughput and the number of users it serves, as
\begin{equation}\label{eq:tput_with_number_of_users}
    T_i(t)=\hat{r}_{\text{u}}(t)N_{\text{u},i}.
\end{equation}

After that, using \eqref{eq:user_resource} into \eqref{eq:tput_with_number_of_users}, we get
\begin{equation}\label{eq:tput_user_derivation}
    T_i(t)=(r_{\text{u}}(t)-\Upsilon_i) N_{\text{u},i}.
\end{equation}

When \eqref{eq:tput_penalty} is used in \eqref{eq:tput_user_derivation}, $T(t)$ becomes
\begin{equation}\label{eq:tpu_cases_simplified}
\begin{split}
 T_i(t)  =  \,r_{\text{u},i}(t)N_{\text{u},i} - 
    \left\{
        \begin{array}{cc}
            T_{\text{r},i}(t)-T_{\text{p},i}, & T_{\text{r},i}(t)>T_{\text{p},i}\\
            0, &  \text{otherwise},
         \end{array}
    \right.
\end{split}
\end{equation}
after simplifying.

Then, using \eqref{eq:resource_fact}, \eqref{eq:tpu_cases_simplified} can be rewritten as 
\begin{equation}
     T_i(t)  = \\
    \left\{
        \begin{array}{cc}
            T_{\text{p},i}, & T_{\text{r},i}(t)>T_{\text{p},i}\\
            T_{\text{r},i}(t), &  \text{otherwise},
        \end{array}
    \right.
\end{equation}

Next, the throughput can be normalized with respect to the installed capacity, as
\begin{equation}\label{eq:norm}
    \lambda_i(t)=\frac{T_{\text{d},i}(t)}{T_{\text{p},i}},
\end{equation}
where $\lambda_i(t)$ is the normalized throughput---also defined as the load---of $B_i$ at time $t$. 
Therefore, dividing~\eqref{eq:tpu_cases_simplified} and using~\eqref{eq:norm}, we arrive at
\begin{equation}\label{eq:tput_last_before_step}
     \bar{T}_i(t)  = \\
    \left\{
        \begin{array}{cc}
            1, & \lambda_i(t)>1\\
            \lambda_i(t), &  0\leq \lambda_i(t)\leq 1,
         \end{array}
    \right.
\end{equation}
where $\bar{R}_{\text{d},i}(t)$ is the normalized version of $R_{\text{d},i}(t)$.
Then, using the unit step function, \eqref{eq:tput_last_before_step} can be rewritten as 
\begin{equation}
     \bar{T}_i(t)=u(-\lambda_i(t)+1)\lambda_i(t) + u(\lambda_i(t)-1).
\end{equation}

Lastly, in order to calculate the holistic network throughput, we perform a summation over all BSs, arriving at~\eqref{eq:tput_for_ref}.

\section*{Acknowledgement}
We acknowledge the support of EPSRC (GCRF) funds under the Grant no. EP/P028764/1.

\bibliographystyle{IEEEtran}

\end{document}